\newtheorem{lemma}{Lemma}
\newtheorem{remark}{Remark}
\def\U{{\bf U}}
\def\V{{\bf V}}
\def\F{{\bf F}}
\def\T{{\bf T}}
\def\G{{\bf G}}
\def\H{{\bf H}}
\def\U{{\bf U}}
\def\r{{\bf r}}
\def\s{{\bf s}}
\def\u{{\bf u}}
\def\Thetab{\bm{\Theta}}
\def\Sigmab{\bm{\Sigma}}
\def\tr{\operatorname{tr}}
\def\diag{\operatorname{diag}}
\title{Interference Leakage Minimization \\ in RIS-assisted MIMO Interference Channels}
\name{Ignacio~Santamaria$^1$, Mohammad Soleymani$^2$, Eduard Jorswieck$^3$, Jes{\'u}s Guti{\'e}rrez$^4$ \thanks{The work of I. Santamaria was supported under grant PID2019-104958RB-C43 (ADELE) funded by MCIN/AEI /10.13039/501100011033. The work of E. Jorswieck and J. Guti{\'e}rrez was supported in part by the Federal Ministry of Education and Research (BMBF, Germany) as part of the 6G Research and Innovation Cluster 6G-RIC under Grants 16KISK031 and 16KISK026, respectively.}}
\address{\normalsize$^1$Department of Communications Engineering, Universidad de Cantabria, 39005 Santander, Spain\\ \normalsize
$^2$Signal and System Theory Group, Universit{\"a}t  Paderborn, 33098 Paderborn, Germany\\ \normalsize
$^3$Institute for Communications Technology, Technische Universit{\"a}t Braunschweig, 38106 Braunschweig,
Germany\\ \normalsize
$^4$IHP - Leibniz-Institut
f{\"u}r Innovative Mikroelektronik, 15236 Frankfurt (Oder), Germany
}
\begin{document}
\ninept
\maketitle
\begin{abstract}
We address the problem of interference leakage (IL) minimization in the $K$-user multiple-input multiple-output (MIMO) interference channel (IC) assisted by a reconfigurable intelligent surface (RIS). We describe an iterative algorithm based on block coordinate descent to minimize the IL cost function. A reformulation of the problem provides a geometric interpretation and shows interesting connections with envelope precoding and phase-only zero-forcing beamforming problems. As a result of this analysis, we derive a set of necessary (but not sufficient) conditions for a phase-optimized RIS to be able to perfectly cancel the interference on the $K$-user MIMO IC.
\end{abstract}
\begin{keywords}
Reconfigurable intelligent surface (RIS), interference channel, multiple-input multiple-output, interference leakage minimization
\end{keywords}

\section{Introduction}
\label{sec:intro}

The ability of reconfigurable intelligent surfaces (RISs) to reflect signals that can superimpose coherently at the desired receivers to boost signal power, or can null interference at the unintended receivers, has resulted in a wide range of applications for RIS-assisted wireless communication systems \cite{{RenzoJSAC2020},{pan2020multicell},{ZapponeTWT2019},{ZhangTWT2019},{SoleymaniTVT22}}.

In this work, we consider interference channels (IC) assisted by RIS, which have attracted much attention recently. In \cite{SchoberTCOM2022} the authors study the degrees of freedom (DoF) of the $K$-user RIS-assisted single-input single-output (SISO) IC with channel or symbol extensions (time-varying channels). In \cite{HuangTVT2020} the authors characterize the achievable rate region of a multiple-input single-output (MISO) interference channel with $K$ users where each user is assisted by one RIS. In \cite{FuICC2021} the authors optimize the RIS to maximize the DoF of the $K$-user MIMO IC, but they consider symbol extensions. In \cite{AbradoCL2021} a weighted minimum mean square error (wMMSE) approach that accounts for the mutual coupling among the RIS elements is used to maximize the sum rate.

In this paper, we focus on the $K$-user multiple-input multiple-output (MIMO) interference channel (IC) without symbol extensions, where the RIS goal is to align, cancel, or neutralize the interference. As a cost function, we consider the total interference power at all receivers, a cost function which is known in the interference alignment literature as {\it interference leakage} (IL) \cite{gomadam2011distributed}. The IL can be minimized by a joint design of the precoders/decoders and the RIS, or by the RIS alone. We consider
{\it active RISs}, for which the amplitudes and phases can be independently optimized, and {\it passive lossless RISs}, for which only the phase shifts can be optimized \cite{{SchoberTCOM2022},{SchoberPoorTCOM2023}}. We will refer to the latter simply as RIS. We show that, under a mild condition on the number of RIS elements, an active RIS can always perfectly cancel the interference (zero-IL) and the problem has a simple closed-form solution. When only the phases can be optimized, it is necessary to apply some iterative algorithm to minimize IL. A reformulation of the IL minimization problem in RIS-assisted systems provides a geometric interpretation of the problem and shows interesting connections with envelope precoding and phase-only zero-forcing beamforming problems \cite{PanJTSP2014},\cite{ZhangTSP2016},\cite{ZhaoJTSP2016}. As a result of this analysis, we derive a set of necessary (but not sufficient) conditions for a phase-optimized RIS to be able to perfectly cancel the interference on the $K$-user MIMO IC.

\section{IL Minimization in RIS-Assisted Systems }
\label{sec:IL}
\subsection{System Model}
We consider a $K$-user MIMO interference channel assisted by a RIS that facilitates or enables interference alignment (IA) \cite{Jafar08} or interference neutralization \cite{Ho12}. The $k$th user has $T_k$ transmit antennas, $R_k$ receive antennas and transmits $d_k$ data streams. According to the commonly used notation, we denote the MIMO-IC in abbreviated form as $(T_k\times R_k, d_k)^K$. We focus on scenarios for which IA is not feasible without the assistance of a RIS, meaning that it is not possible to perfectly cancel the interference at all unintended receivers by optimizing only the precoders and the decoders \cite{gonzalez2014feasibility}. The equivalent MIMO channel from the $l$th transmitter to the $k$th receiver is
\begin{equation}
\widetilde{\H}_{lk}= \H_{lk} + \F_k^H \Thetab \G_l,
\label{eq:MIMOchannel}
\end{equation}
where $\H_{lk} \in \mathbb{C}^{R_k \times T_l}$ is the $(l,k)$ MIMO interference channel, $\G_l \in \mathbb{C}^{M \times T_l}$ is the channel from the $l$th transmitter to the RIS, $\F_k \in \mathbb{C}^{M \times R_k}$ is the channel from the RIS to the $k$th receiver, and $\bm{\Theta}=\text{diag} \left(\r \right)$ is the $M\times M$ diagonal RIS matrix with diagonal $ \r = \left(r_1, r_2,\ldots,r_M\right)^T$, where $|r_m|$ and ${\rm arg} (r_m) \in [-\pi, \pi)$ are the amplitude and phase shift of the $m$th reflecting element. 

\subsection{IL cost function}
The IL minimization problem is to find precoders $\V_l \in \mathbb{C}^{T_l\times d_l}$ for $l=1,\ldots,K$; decoders $\U_k \in \mathbb{C}^{R_k\times d_k}$ for $k=1,\ldots,K$; and RIS elements $\bm{\Theta} = {\rm diag}(\r)$ that minimize the IL
\begin{equation}
IL(\{\V_l\},\{\U_k\},\Thetab)=\sum_{l \neq k} \|\U_k^H \left( \H_{lk}+ \F_k^H \Thetab \G_l\right) \V_l  \|_F^2,
\label{eq:ILcostfunction}
\end{equation}
where each term of the sum is the squared Frobenius norm of the equivalent MIMO interference channel from the $l$th transmitter to the $k$th receiver in \eqref{eq:MIMOchannel} after precoding and decoding.

The minimization of \eqref{eq:ILcostfunction} may be carried out through a 3-step alternating optimization process, where in each step a set of variables (decoders, precoders, or RIS) is optimized while the other variables are held fixed:
\begin{enumerate}
    \item Optimize $\{\U_k\}_{k=1}^K$ for fixed $(\{\V_l\}_{l=1}^K, \Thetab)$.
    \item Optimize $\{\V_l\}_{l=1}^K$ for fixed $(\{\U_k\}_{k=1}^K, \Thetab)$.
    \item Optimize $\Thetab$ for fixed $(\{\U_k\}_{k=1}^K, \{\V_l\}_{l=1}^K)$.
    
\end{enumerate}

Steps 1 and 2, which obtain the precoders and decoders that minimize the IL while keeping the elements of the RIS fixed, can be solved by applying some of the existing methods \cite{gomadam2011distributed, Gonzalez14}. Therefore, in this paper, we mainly focus on RIS optimization for fixed precoders and decoders (Step 3 above). Let $\bar{\F}_k = \F_k \U_k$ be the $M\times d_k$ equivalent channel from the RIS to the receiver after decoding, let $\bar{\G}_l = \G_l \V_l$ be the $M\times d_l$ equivalent channel from the transmitter to the RIS after precoding, and let $\bar{\H}_{lk}=\U_k^H \H_{lk} \V_l$ be the $d_k \times d_l$ equivalent channel matrix after precoding-decoding. Therefore, the IL as a function solely of the RIS elements is  \cite{pan2020multicell}
\begin{eqnarray}
IL(\Thetab) = & \sum_{l \neq k} \| \bar{\H}_{lk} + \bar{\F}_k^H \Thetab \bar{\G}_l \|_F^2 \nonumber \\
= & \tr(\T) + \r^H \Sigmab \r + 2 {\rm Re}(\r^H {\bf s})
\label{eq:ILcostfunction1}
\end{eqnarray}
where $\T =  \sum_{l \neq k} \bar{\H}_{lk}^H \bar{\H}_{lk}$,  ${\bf s} = \sum_{l \neq k} {\rm diag}(\bar{\F}_k \bar{\H}_{lk}\bar{\G}_l^H) $, and $\Sigmab = \sum_{l \neq k} \bar{\F}_k\bar{\F}_k^H \odot \left(\bar{\G}_l \bar{\G}_l^H\right)^*$ (where $\odot$ denotes Hadamard product, $(\cdot)^H$ denotes Hermitian, and $(\cdot)^*$ denotes complex conjugate). It is clear that ${\bf Q}_{F_k}= \bar{\F}_k\bar{\F}_k^H$ and ${\bf Q}_{G_l} = \left(\bar{\G}_l \bar{\G}_l^H\right)^*$ are, respectively, rank-$d_k$ and rank-$d_l$ positive definite matrices (we assume $M > \max(d_k, d_l)$ $ \forall k,l$). Then, ${\bf Q}_{F_k} \odot {\bf Q}_{G_l}$ is also a positive semidefinite matrix of rank $d_k d_l$ \cite{Horn2020}, and $\Sigmab = \sum_{l \neq k} {\bf Q}_{F_k} \odot {\bf Q}_{G_l}^*$ has rank $g = \sum_{l \neq k} d_k d_l $.

Let us assume that $M > g$, which is not a restrictive condition since $g$ is usually much smaller than the number of RIS elements. Under this assumption, $\Sigmab$ is a rank-deficient semidefinite $M \times M$ matrix with eigendecomposition 
\begin{equation*}
    \Sigmab = \begin{bmatrix} \U_{\text{signal}} & \U_{\text {noise}} \end{bmatrix}
    \begin{bmatrix} \diag(\lambda_1,\ldots, \lambda_g) & {\bf 0} \\ {\bf 0} & {\bf 0}\end{bmatrix}
    \begin{bmatrix} \U_{\text{signal}}^H \\ \U_{\text {noise}}^H \end{bmatrix}.
\end{equation*}
We will refer to $\U_{\text{signal}} \in  \mathbb{C}^{M\times g}$ and $\U_{\text {noise}} \in  \mathbb{C}^{M\times (M-g)}$ as the {\em signal} and {\em noise} subspaces, respectively. Furthermore, we  may write $\Sigmab = \U_{\text{signal}} {\bm{\Lambda}} \U_{\text{signal}}^H$ where ${\bm {\Lambda}} = {\rm diag}(\lambda_1,\ldots,\lambda_g)$.

\subsection{Minimum IL RIS}
In this subsection, we consider the problem of optimizing the RIS elements to minimize IL. We consider active and passive lossless RISs. With an active RIS, it is possible to achieve zero interference zero and the IL minimization problem has a simple closed-form solution. When only the phases of the RIS can be optimized, it is necessary to apply some iterative algorithm. In particular, we describe one based on an alternating optimization procedure, which is guaranteed to converge to a stationary point.

\subsubsection{Active RIS}
For an active RIS, the IL can be completely canceled regardless of the precoders and the decoders under the assumption $M > g$. From the definition of ${\bf s} = \sum_{l \neq k} {\rm diag}(\bar{\F}_k \bar{\H}_{lk}\bar{\G}_l^H)$, it follows that ${\bf s} \in {\rm colspan}(\Sigmab)$, and therefore the optimal unconstrained solution for the RIS coefficients is 
\begin{equation*} 
\r_{unc} = -\Sigmab^{\sharp} \ \s = -  \U_{\text{signal}} {\bm{\Lambda}}^{-1} \U_{\text{signal}}^H \, {\bf s} =  \U_{\text{signal}} \, {\bm {\alpha}},
\end{equation*}
where $\Sigmab^{\sharp}$ denotes the pseudoinverse of $\Sigmab$ and we have defined the coordinates of $\r_{unc}$ in the signal subspace basis as ${\bm {\alpha}} = -{\bm{\Lambda}}^{-1}\U_{\text{signal}}^H \, {\bf s}$. Since $\Sigmab \Sigmab^{\sharp}$ is a rank-g projection matrix onto the signal subspace spanned by the columns of $\U_{\text{signal}}$ \cite[pp. 389]{Coherence}, and ${\bf s} \in {\rm colspan}(\Sigmab)$, it follows that 
\begin{equation*}
    2 {\rm Re} (\r^H \Sigmab  \r_{unc}) = -2 {\rm Re} (\r^H  \Sigmab \Sigmab^{\sharp} \ \s) = -2 {\rm Re} (\r^H \s).
\end{equation*}
Furthermore, it is easy to check that $\tr(\T) = \r_{unc}^H\Sigmab\r_{unc}$. Then, the IL cost function \eqref{eq:ILcostfunction1} may be rewritten as
\begin{equation*}
IL(\r) = (\r-\r_{unc})^H\Sigmab (\r-\r_{unc}),
\end{equation*}
and, clearly, the optimal unconstrained solution $\r_{unc}$ achieves $IL(\r_{unc}) = 0$.

\begin{remark}
Note that it is possible to enforce the constraint $|r_m| \leq 1$, $\forall m$, or the constraint $\| \r \|_2^2 \leq 1$ as a regularized version of the unconstrained solution $ - \U_{\text{signal}}  ({\bm{\Lambda} + \mu {\bf I}})^{-1}\U_{\text{signal}} ^H \,{\bf s}$, where $\mu \geq 0$ is a positive regularization parameter that may be chosen via bisection to enforce the required constraint. 
\end{remark}

\subsubsection{Passive RIS with $|r_m| = 1$}
The solution for a RIS satisfying $|r_m| = 1$, $\forall m$ can be found by solving the following problem
\begin{align}\label{eq_RIS_UMproblem1}
({\cal P}_1): \,\min_{\r}\,\,& (\r-\r_{unc})^H \Sigmab  (\r-\r_{unc}) \nonumber\\
\text{s.t.}\,\,&|r_m| =1, \forall m.
\end{align}
This is a unit-modulus quadratic programming problem. The function to be minimized is convex but the constraint is not. In the literature there are several algorithms  to solve quadratic programming problems like \eqref{eq_RIS_UMproblem1}  \cite{SidiropoulosTSP2012,pan2020multicell,tsinos2017efficient,ZhaoJTSP2016}. 
Next, we describe a block coordinate descent method that is computationally efficient. At each iteration, we fix all values of $\r$ except $r_m = e^{j \theta_m}$. Denoting $\bar{m} = \{ 1, \ldots, m-1,m+1,\ldots, M \}$, the IL as a function of $r_m$ can be written as
\begin{equation*}
IL(r_m) = C + 2 {\rm Re}(r_m^*(s_m + \Sigmab_{\bar{m}}^H \r_{\bar{m}})), 
\end{equation*}
where $C$ is a positive constant, $\Sigmab_{\bar{m}}$ denotes the $m$th column of $\Sigmab$ with the $m$th element removed, and 
\[\r_{\bar{m}} = \left( r_1,\, \, \ldots, \, \, r_{m-1}, \, \, r_{m+1}, \, \,\ldots, \, \, r_M \right)^T.
\]
The optimization problem is
\begin{align*}
\min_{r_m}\,\,&  {\rm Re}(r_m^*(s_m + \Sigmab_{\bar{m}}^H \r_{\bar{m}})) \\
\text{s.t.}\,\,& |r_m| = 1,
\end{align*}
which has the following closed-form solution
\begin{equation*}
\theta_m = \angle{\left(s_m + \Sigmab_{\bar{m}}^H \r_{\bar{m}} \right)} - \pi.
\end{equation*}
This algorithm belongs to the category of block coordinate descent (BCD) methods. Since at each BCD step, the problem is univariate (i.e., a single element of the RIS is updated at each step), and the minimizer is unique, then its convergence to a stationary point is guaranteed \cite{{Bertsekas99},{razaviyayn2013unified}}. 

\section{Zero-IL RIS}
\label{sec-ii-d}
In this section, we address the feasibility problem to achieve zero IL by using a passive RIS with $|r_m| = 1$ for arbitrary precoders and decoders. The following lemma, which summarizes the main technical contribution of this work, gives a set of necessary (but not sufficient) conditions  for the existence of RIS that perfectly cancels interference.

\begin{lemma}\label{Prop:LemmaRIS} 
Consider a RIS-assisted IC $(T_k\times R_k, d_k)^K$ with arbitrary precoders and decoders. The RIS has $M > g = \sum_{l \neq k} d_k d_l$ elements so that the unconstrained solution is $\r_{unc} = \U_{\text{signal}} \, {\bm {\alpha}}$. Then, a set of necessary conditions for the existence of a passive lossless or unit-modulus RIS that achieves zero IL is
\begin{equation*}
    \alpha_i \in {\cal D}_i, \quad i=1, \ldots, g,
\end{equation*}
where each of the ${\cal D}_i$ represents an annular region with known inner and outer radii.
\end{lemma}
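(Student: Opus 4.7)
The plan is to first pin down the full set of RIS vectors that achieve zero interference leakage, and then translate the unit-modulus constraint into per-coordinate constraints on $\bm{\alpha}$ via the signal/noise subspace decomposition already set up in the previous subsection.

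From the quadratic-form rewrite $IL(\r) = (\r - \r_{unc})^H \Sigmab (\r - \r_{unc})$, the equation $IL(\r)=0$ is equivalent to $\Sigmab(\r-\r_{unc})=\mathbf{0}$, which (since $\Sigmab = \U_{\text{signal}}\Lambdab\U_{\text{signal}}^H$ with $\Lambdab \succ 0$) means $\r - \r_{unc} \in \mathrm{span}(\U_{\text{noise}})$. Projecting onto the signal subspace therefore forces $\U_{\text{signal}}^H \r = \bm{\alpha}$, so the $g$ signal-subspace coordinates of any zero-IL RIS must coincide exactly with those of the unconstrained solution.

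Next, I would expand each coordinate entry-wise: $\alpha_i = \u_i^H \r = \sum_{m=1}^M \overline{u_{m,i}}\, r_m$, where $\u_i$ denotes the $i$-th column of $\U_{\text{signal}}$, and invoke $|r_m|=1$. The triangle inequality yields the outer bound $|\alpha_i| \le \sum_m |u_{m,i}|$, which is tight when the phases are chosen as $\angle r_m = \angle u_{m,i}$. For the inner bound, I would isolate the dominant entry $m^\star = \arg\max_m |u_{m,i}|$ and apply the reverse triangle inequality to obtain $|\alpha_i| \ge 2\max_m |u_{m,i}| - \sum_m |u_{m,i}|$, clipped to zero if this quantity is negative. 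This identifies ${\cal D}_i$ as the annulus with inner radius $\max\bigl(0,\,2\max_m |u_{m,i}| - \sum_m |u_{m,i}|\bigr)$ and outer radius $\sum_m |u_{m,i}|$, both of which are computable from the SVD of $\Sigmab$.

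The argument is elementary once the subspace reformulation is in place; there is no serious analytic obstacle. The reason the conditions are only necessary is worth flagging in the proof: each annulus ${\cal D}_i$ is obtained by treating the $M$ phases $\angle r_m$ as free to optimize for coordinate $i$ alone, whereas a feasible zero-IL RIS must simultaneously realize all $g$ prescribed coordinates $\alpha_1,\dots,\alpha_g$ using a single shared phase vector. Capturing that coupling would require a joint feasibility test on $\bm{\alpha}$, which is the main obstacle to strengthening the necessary conditions into sufficient ones and which motivates the iterative algorithm of the previous subsection.
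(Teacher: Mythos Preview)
Your proposal is correct and follows essentially the same route as the paper: characterize the zero-IL set via $\U_{\text{signal}}^H\r=\bm{\alpha}$, then bound each coordinate $\alpha_i=\sum_m \overline{u_{m,i}}\,r_m$ with $|r_m|=1$ to obtain the annular region. The only difference is cosmetic---the paper absorbs the phases of $u_{m,i}$ into a shifted variable $\phi_m$ and cites the envelope-precoding literature for the inner/outer radii, whereas you derive the same radii directly from the triangle and reverse-triangle inequalities, which makes your version slightly more self-contained.
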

\begin{proof}
Since $M>g$ any solution of ${\cal P}_1$ in (\ref{eq_RIS_UMproblem1}) such that
\begin{equation}
    \r = \r_{unc} + \U_{\text{noise}} \, {\bm {\beta}} = \U_{\text{signal}} \, {\bm {\alpha}} + \U_{\text{noise}} \, {\bm {\beta}}
    \label{eq:vectorr} 
\end{equation}
does not modify the value of the cost function and, therefore, is a solution that achieves zero-IL as well. In \eqref{eq:vectorr}, ${\bm {\alpha}}$ is a $g \times 1$ complex vector $ {\bm {\alpha}} = (\alpha_1, \ldots, \alpha_g)^T$,  and ${\bm {\beta}} = (\beta_1, \ldots, \beta_{M-g})^T$ is a $(M-g) \times 1$ complex vector.
The feasibility problem amounts to answer the following question: does there exist any $\r$, or, equivalently, any ${\bm {\beta}} \in \mathbb{C}^{(M-g)\times 1}$, such that the vector $ \U_{\text{signal}} \, {\bm {\alpha}} +  \U_{\text{noise}} \, {\bm {\beta}}$ has unit-modulus components? To try to answer this question, the problem can be reformulated as follows. Let $\r = (e^{j\theta_1},\ldots, e^{j\theta_M})^T$ be a vector with the RIS elements, and let $\U = [ \U_{\text{signal}} \,  \U_{\text{noise}}]=$ $ [\u_1,\ldots, \u_g,\u_{g+1},\ldots,\u_{M}]$ be a basis obtained from the eigendecomposition of the rank-deficient matrix $\Sigmab$. Premultiplying \eqref{eq:vectorr}  by any of the $g$ basis vectors of $ \U_{\text{signal}}$, the following system of nonlinear equations is formed
\begin{equation}
    \u_i^H\r  = \sum_{m=1}^M u_i^*(m)e^{j\theta_m}  = \alpha_i, \quad i= 1,\ldots,g, 
    \label{eq:polygonsabis} 
\end{equation}
where $\alpha_i= \u_i^H\r_{unc}$ for $i=1,\ldots,g$, are known complex values, while the RIS phases $\theta_m$ for $m=1,\ldots,M$, are unknown.
Note that if there exists a RIS satisfying \eqref{eq:polygonsabis}, then the  equations $ \U_{\text{noise}}^H \, \r ={\bm {\beta}}$ are automatically satisfied with $\|{\bm {\beta}} \|_2^2 = M- \|{\bm {\alpha}} \|_2^2$ due to the orthogonality between $ \U_{\text{signal}}$ and $ \U_{\text{noise}}$. Therefore, for fixed precoders and decoders, the RIS-assisted interference neutralization problem amounts to finding phases $\theta_m$ for $m=1,\ldots, M$ satisfying \eqref{eq:polygonsabis}.

For notational convenience, we denote $g_{i,m} = |u_i(m)|$, $\phi_m = \theta_m - \angle{u_i(m)}$ for $m=1, \ldots,M $. Without loss of generality, we assume that the moduli are sorted as $g_{i,1} \geq g_{2,m} \geq \ldots \geq g_{i,M}$. Then, the zero-IL RIS feasibility problem is to determine whether there exists a set of phases $\phi_m$, $m=1,\ldots, M$, satisfying
\begin{align}\label{eq_zeroILproblem}
({\cal P}_2): \, \sum_{m=1}^M g_{i,m}e^{j\phi_m} & = \alpha_i, \quad i= 1,\ldots,g.
\end{align}
Given a vector $\left(g_{i,1},\ldots, g_{i,M}\right)$ in \eqref{eq_zeroILproblem}, it is possible to characterize the region of the complex plane that can be reached by varying the phases of the RIS. More formally, this set is defined as
\begin{equation*}
    {\cal D}_i = \left \{  \sum_{m=1}^M g_{i,m}e^{j\phi_m} \, \bigg\vert \, \phi_m \in [0, 2 \pi), \,\, m=1, \ldots M \right \}.
\end{equation*}
If $\alpha_i \in {\cal D}_i$ then there are (possibly a continuum of) phases that satisfy $\sum_{m=1}^M g_{i,m}e^{j\phi_m}  = \alpha_i$. The region ${\cal D}_i$ was first studied in the context of a constant envelope precoding problem \cite{LarssonTWT2012}, where it was shown that ${\cal D}_i$ is a doughnut region given by
\begin{equation*}
    {\cal D}_i = \left \{  \alpha_i \in \mathbb{C} \quad | \quad  r_i \leq |\alpha_i| \leq R_i\right \},
\end{equation*}
where the outer radius is $R_i = \sum_{m=1}^M g_{i,m}$. An explicit expression for the radius of the inner circle was derived in \cite{WKMa_icassp13}. Theorem 1 in \cite{WKMa_icassp13} shows that $r_i = \max \{ g_{i,1} - \sum_{m=2}^M g_{i,m}, 0 \}$. If the difference is negative, the inner radius is zero and the doughnut region becomes a disk of radius $R_i$. As a direct application of these results, Lemma \ref{Prop:LemmaRIS} states that if any of the $\alpha_i$ does not belong to its corresponding region ${\cal D}_i$, then problem ${\cal P}_2$ in \eqref{eq_zeroILproblem} is infeasible. 
\end{proof}

\noindent From another perspective, each of the equations in \eqref{eq_zeroILproblem} can be interpreted as the equation of a polygon of known sides $(g_{i,1},\ldots, g_{i,M}$ $, |\alpha_i| )$ in the complex plane. The question of whether it is possible or not to form a polygon of given sides has been recently studied in the context of envelope precoding \cite{{PanJTSP2014},{ZhangTSP2016}} and secure communications through phase-only zero-forcing (ZF) beamforming \cite{ZhaoJTSP2016}. 
\noindent Therefore, it is possible to write an equivalent set of necessary conditions for the feasibility of \eqref{eq_zeroILproblem} based on whether it is possible to form a polygon with each of the equations of the problem.

\begin{remark} It may be illustrative to consider the simplest scenario with $M=2$ and $g=1$ so that $\Sigmab$ is a rank-one $2 \times 2$ matrix. The one-dimensional signal and noise subspaces of $\Sigmab$ are $\u_{\text{signal}}$ and $\u_{\text{noise}}$. This toy example could represent for example a SISO cognitive radio scenario assisted by a RIS with just $M=2$ elements located near the primary receiver that tries to cancel the interference produced by the secondary transmitter (there is a single interference stream and hence $g=1$). The unconstrained active RIS solution is $\r_{unc} = \u_{\text{signal}} \, \alpha$. In addition, according to the notation introduced in the paper $g_1 = |\u_{\text{signal}}(1)|$ and $g_2 = |\u_{\text{signal}}(2) |$. The results of this section specialize to $M=2$ as follows. A necessary (and in this case also sufficient) condition for the existence of a passive phase-only RIS that perfectly cancels the interference is that $\alpha$ belongs to the region
\begin{equation*}
    {\cal D} = \left \{  \alpha \in \mathbb{C} \quad | \quad   |g_1-g_2| \leq |\alpha| \leq g_1+g_2 \right \}.
\end{equation*}
Alternatively, this condition amounts to saying that it is possible to form a triangle with sides $(g_1,g_2,|\alpha|)$. Furthermore, it is easy to show that in this case there are exactly two solutions to problem \eqref{eq_RIS_UMproblem1} given by
\begin{equation*}
    \r = {\bf u}_s \alpha + {\bf u}_n \sqrt{2-|\alpha|^2} e^{j\theta_{\beta}},
\end{equation*}
where
\begin{equation*}
    \theta_{\beta} = \pm \cos^{-1} \left( \frac{1-|a|^2 - |b|^2}{2 |a||b|} \right) +\theta_{a} -\theta_{b},
\end{equation*}
where 
\begin{align*}
    |a|e^{j\theta_{a}} &= \alpha {\bf u}_{signal}(1),   \\
     |b|e^{j\theta_{b}} &= \sqrt{2-|\alpha|^2} {\bf u}_{noise}(1).
\end{align*}
\end{remark}
 
\section{Simulation results}
\begin{figure}[th]
    \centering
\includegraphics[width=.5\textwidth]{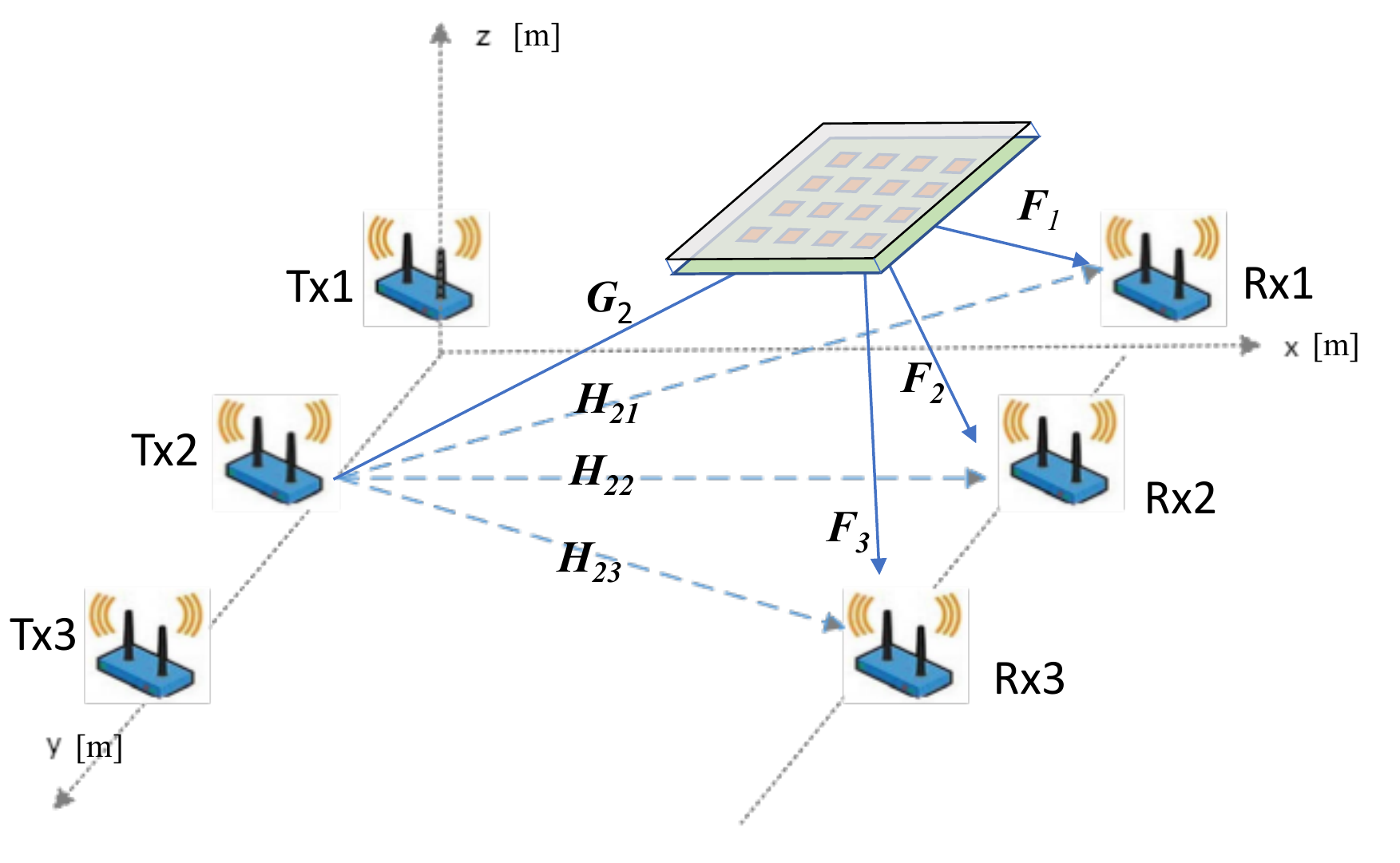}
     \caption{Simulation setup for the RIS-assisted MIMO IC.}
	\label{fig:setup}
\end{figure}
We consider a $(3 \times 3,2)^K$ MIMO-IC assisted by a RIS with $M$ elements for $K=3$ and $K=2$ users. IA is infeasible for this network without the assistance of a RIS, which is to say that it is not possible to achieve zero-IL by optimizing only the precoders and decoders \cite{gonzalez2014feasibility}. The coordinates $(x,y,z)$ in meters of the three transmitters are (0,\,0,\,2), (0,\,25,\,2), and (0,\,50,\,2); respectively. The receivers are located at (50,\,0,\,2), (50,\,25,\,2), and (50,\,50,\,2). There is only one RIS located at (40,\, 25,\,15) (cf. Fig. \ref{fig:setup}). For $K=2$ users, the link from Tx2 to Rx2 in Fig. \ref{fig:setup} is eliminated. The large-scale path loss in dB is given by
\begin{equation*}
PL = -30 - 10 \beta \log_{10} \left( d \right),
\label{eq:pathloss}
\end{equation*}
where $d$ is the link distance, and $\beta$ is the path-loss exponent. The direct Tx-Rx links, $\H_{lk}$, are assumed to be non-line-of-sight (NLOS) channels, with path-loss exponent $\beta = 3.75$ and small-scale Rayleigh fading. The Tx-RIS-Rx links, $\F_k$ and $\G_l$, are assumed to be line-of-sight (LOS) channels, with path-loss exponent $\beta = 2$ and small-scale Rice fading with Rician factor $\gamma =3$. A more detailed description of the system parameters can be found in \cite{soleymani2022improper}.
\begin{figure}[t]
    \centering
\includegraphics[width=.5\textwidth]{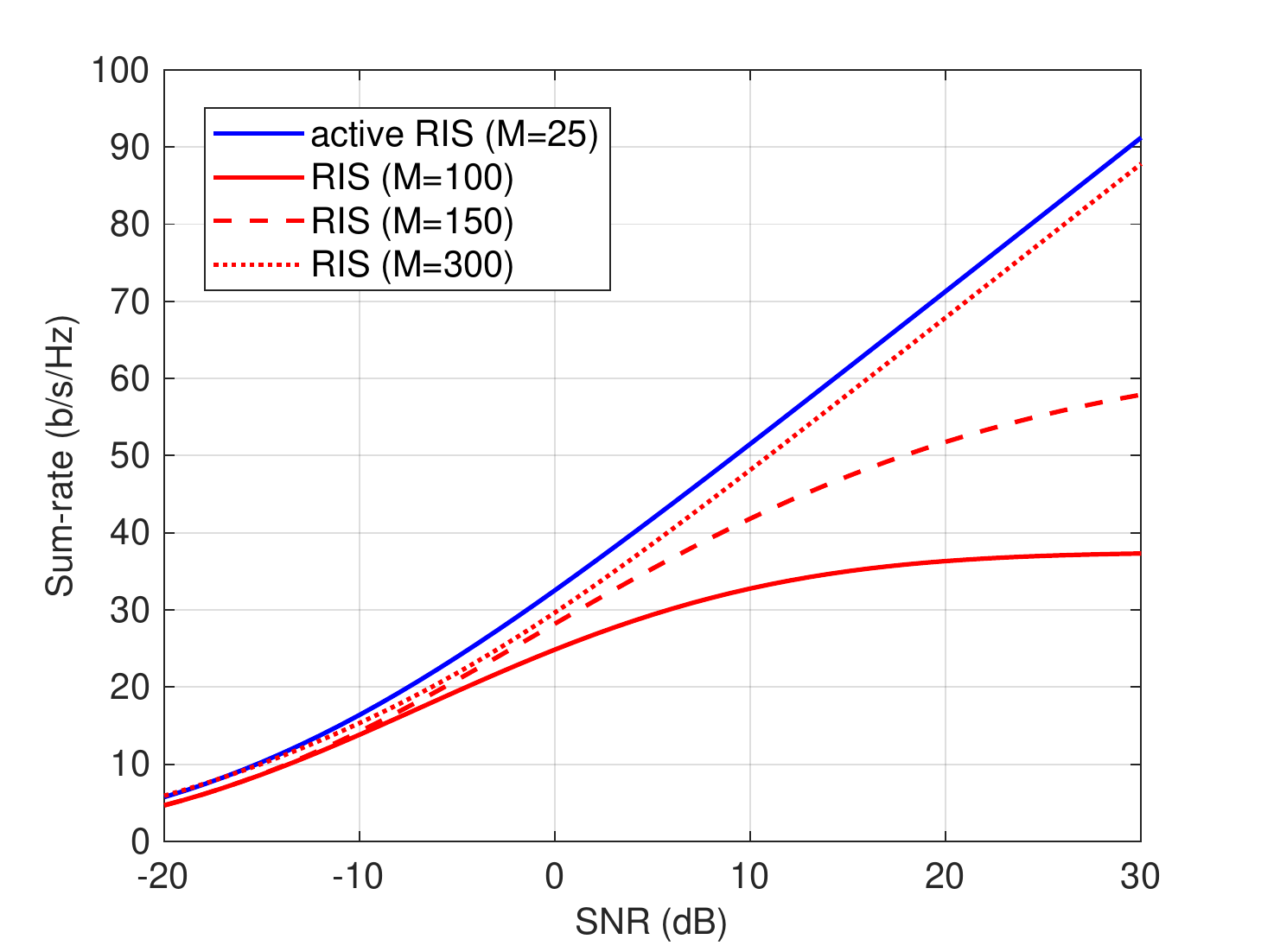}
     \caption{Sum-rate vs SNR for a RIS and an active RIS with different numbers of elements in a $(3 \times 3,2)^K$ MIMO IC.}
	\label{fig:sum-rate}
\end{figure}
\begin{figure}[ht]
    \centering
\includegraphics[width=.5\textwidth]{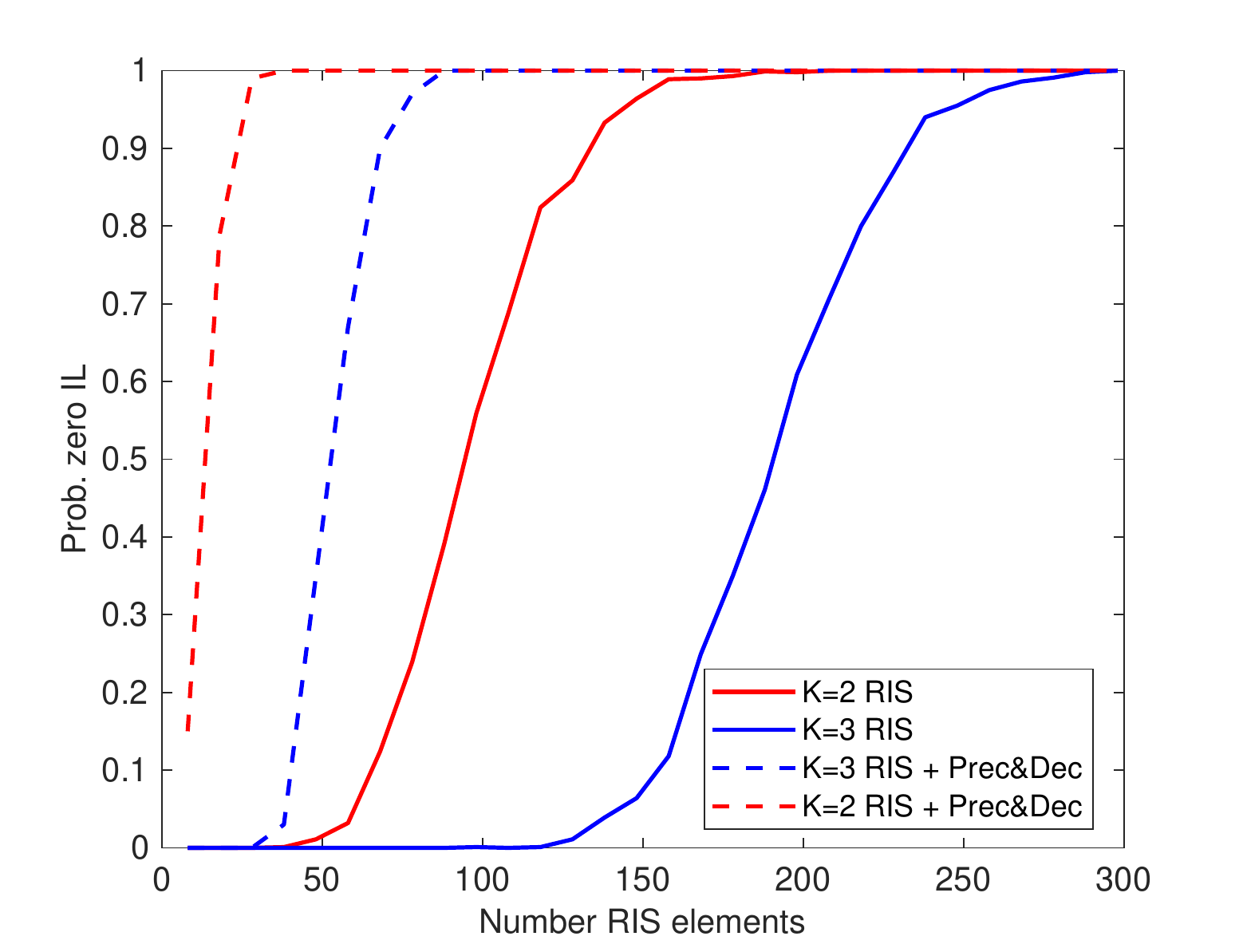}
     \caption{Probability that problem ${\cal P}_2$ (cf. \eqref{eq_zeroILproblem}) is feasible as a function of the number of RIS elements. }
	\label{fig:Probfeasible}
\end{figure}

Fig. \ref{fig:sum-rate} compares the sum-rates achieved in a RIS-assisted $(3\times3,2)^3$ MIMO IC by i) an active RIS with $M=25$ elements, and ii) a RIS with $M= 100$, $M=150$ and $M=300$ unit-modulus elements. In this example, the precoders and decoders are unitary random matrices. The curves have been obtained by averaging 1000 independent realizations of the MIMO IC. For this example, $g = \sum_{l \neq k} d_k d_l =24$, so an active RIS with $M=25$ achieves $IL=0$. However, when only the phases are optimized, nearly $M=300$ elements are needed to completely neutralize the interference and thus extract all the DoF of the IC. The sum-rate vs. SNR curves also implicitly reflect the IL performance of the different algorithms. Those cases where the sum-rate curve tends to saturate (cases with $M=100$ and $M=150$ in Fig. \ref{fig:sum-rate}) indicate that the IL does not converge to 0, the system attains 0 DoF, and IA is infeasible. Cases with $M=300$ or active RIS in Fig. \ref{fig:sum-rate} reach the maximum slope (maximum number of DoFs), thus implicitly indicating that the IL has converged to 0.

Fig. \ref{fig:Probfeasible} studies the probability that a passive lossless RIS can achieve zero IL (feasibility of problem ${\cal P}_2$ in  \eqref{eq_zeroILproblem}), as a function of the number of RIS elements. For each channel realization, if the IL after the optimization algorithm is less than $10^{-8}$, we declare the problem feasible; otherwise, we declare it infeasible. Fig. \ref{fig:Probfeasible} shows in solid line the case where only the RIS phases are optimized (random precoders and decoders are applied in this case) and in dashed line the case where the precoders, decoders, and RIS are optimized to minimize the IL. Clearly, when the precoders and decoders are also optimized, the number of RIS elements required to achieve zero-IL decreases significantly. When only the RIS phases are optimized, a RIS with $M \geq 150$ element achieves zero-IL with high probability for $K=2$, while for $K=3$ $M\geq 300$ elements are needed to completely cancel the interference.

\section{Conclusions}
The problem of achieving zero interference leakage (i.e., achieving perfect interference cancellation) in a RIS-assisted interference channel admits a geometrical interpretation that connects it with envelope precoding and phase-only zero-forcing problems. This, in turn, allows us to derive a set of necessary conditions for the feasibility of the problem. Our simulations show that, when the number of RIS elements is sufficiently high, it is possible with a high probability to perfectly cancel or neutralize the interference at all receivers.

\vfill
\pagebreak

\newpage
\bibliographystyle{IEEEbib}
\vfill\pagebreak
\balance
\bibliography{refs}

\end{document}